\theoremstyle{definition}
\newtheorem{definition}{Definition}[section]
\newtheorem{prop}{Proposition}[section]
\newtheorem{example}{Example}[section]
\newtheorem{theorem}{Theorem}[section]
\title{\LARGE \bf
Improving the Feasibility of Moment-Based \\ Safety Analysis for Stochastic Dynamics
}
\author{Peter Du, Katherine Driggs-Campbell, and Roy Dong% <-this % stops a space
\thanks{P. Du, K. Driggs-Campbell, and R. Dong are with the Department of Electrical and Computer Engineering at the University of Illinois at Urbana-Champaign. email: \{peterdu2, krdc, roydong\}@illinois.edu}
\thanks{This work has been submitted to the IEEE for possible publication. Copyright may be transferred without notice, after which this version may no longer be accessible.}%
}
\begin{document}

\maketitle

\begin{abstract}

Given a stochastic dynamical system modelled via stochastic differential equations (SDEs), we evaluate the safety of the system through characterizations of its exit time moments. We lift the (possibly nonlinear) dynamics into the space of the occupation and exit measures to obtain a set of linear evolution equations which depend on the infinitesimal generator of the SDE. Coupled with appropriate semidefinite positive matrix constraints, this yields a moment-based approach for the computation of exit time moments of SDEs with polynomial drift and diffusion dynamics. However, the existing moment approach suffers from drawbacks which impede its applicability to the analysis of higher dimensional physical systems. To apply the existing approach, the dynamics of the systems are limited to polynomials of the state --- excluding a large majority of real world examples. Computational scalability is also poor as the dimensionality of the state increases, largely due to the combinatorial growth of the optimization program. In this paper, we propose changes to the existing moment method to make feasible the safety analysis of higher dimensional physical systems. The restriction to polynomial dynamics is lifted by using a state augmentation method which allows one to generate the evolution equations for a broader class of nonlinear stochastic systems. We then reformulate the constraints of the optimization to mitigate the computational limitations associated with an increase in state dimensionality. We employ our methodology on two example processes to characterize their safety via expected exit times and demonstrate the ability to handle multi-dimensional physical systems that were previously unsupported by the existing SDP method of moments. 

\end{abstract}

\section{Introduction}
\label{sec:introduction}

Safety verification is an important step in ensuring dynamical systems perform in ways that designers intend while also mitigating the risks associated with unplanned behavior. In the deterministic scenario, reachability can be applied to the safety verification problem and produce a boolean result that predicts whether a system enters a set of unsafe configurations at some point in the future. 

In the stochastic setting, we seek an analog to this boolean safety statement through the characterization of the distribution of exit times for a system governed by stochastic dynamics. We consider a state space partitioned into two sets $S$ (safe), and $S^c$ (unsafe), and denote the exit time $\tau$ as the first time that the system, starting from $x_0 \in S$, reaches a state $x_{\tau} \in S^c$.

Our work applies a moment-based approach \cite{lasserre2008sdpGPM} to obtain exit time moments for stochastic dynamical systems modelled via stochastic differential equations (SDEs). We formulate the computation of the exit time as an infinite-dimensional convex optimization over the space of measures through the use of a linear evolution equation on the moments of the measures. Following the approach in \cite{lasserre2004sdpvslp}, we use semidefinite programming (SDP) to compute bounds on the expected duration of safe operation. This moment-based method has previously seen successful application in financial instrument pricing \cite{lasserre2006PricingAC} and optimal control \cite{henrion2008nonlinear}. In this paper, we seek to use the moment method to study the safety properties of dynamical systems through the distributions of their exit time from a safe region. 

In the context of probabilistic systems, it is often intuitive to characterize the safety of a system through some probability $p$ of violating some set of constraints. This can further be combined with an appropriate cost function to generate a risk metric which can aid in the determination of a suitable $p$. Prior works such as \cite{esfahani2016reach}, \cite{wisniewski2015safety}, and \cite{wisniewski2020pSafe} have looked at obtaining a set of initial states that satisfy a safety requirement based on $p$. In our work, we consider the problem from an alternative perspective in which we characterize the safety of a \textit{given} set of initial states through their exit time distribution. 

We consider a semidefinite programming based approach for the numerical computation of exit time moments for Markov processes \cite{lasserre2004sdpvslp}. More precisely, we focus our attention to dynamical systems modelled by stochastic differential equations. Applied to SDEs, the existing method is largely restricted to systems with polynomial drift and diffusion dynamics as it relies on considering processes whose infinitesimal generators map the monomials into polynomials. In particular, the inability to support sinusoidal dynamics prevents application to a large number of physical systems. Indeed, sinusoids can be found in the dynamics of virtually all robotic systems where the forces on the system are applied at varying angles with respect to a chosen coordinate frame. Examples include robotic arms, tracked robots, wheeled robots, and quadcopters. The existing SDP method of moments also suffers from poor computational scalability as the dimensionality of the system and complexity of the safe set grows. The combinatorial growth of the number of moments results in matrix constraints that quickly drive computational costs upward. 

In this paper, we propose the use of an SDP method of moments to analyze the safety characteristics of physical systems via their distribution of exit times from a safe set \cite{lasserre2004sdpvslp}. The following contributions are presented: 

\begin{enumerate}
  \item We discuss the limitations of the existing SDP moment method when applied to high dimensional systems and the restrictions that exclude a large class of applications.
  \item To analyze the safety of complex dynamics, we introduce a state augmentation method and describe the trade-offs in dimensionality. We then propose a reformulation of the matrix constraints used to develop the semidefinite program and replace large sequences of positive semidefinite (PSD) matrix constraints with smaller sets of scalar equality constraints.
  \item We provide examples of our approach in handling a broader class of real world dynamics and demonstrate the ability to compute accurate bounds on the duration of safe behavior by considering higher degree moments.  
\end{enumerate}

This paper is organized as follows: \cref{sec:background} reviews related works and techniques in stochastic system safety verification and the exit time problem. \cref{sec:system_model} describes the SDE system model and notation. \cref{sec:computation} presents the existing SDP method for exit time moment computation. \cref{sec:augmentation} introduces the state augmentation method and \cref{sec:reduced_constraint} presents a reformulation of the optimization constraints. In \cref{sec:experiments} we present numerical examples to characterize system safety. Lastly, we conclude in \cref{sec:conclusion}.

%%%%%%%%%%%%%%%%%%%%%%%%%%%%%%%%%%%%%%%%%%%%%%%%%%%%%%%%%%%%%%%%%%%%%%%%%%%%%%%%
\section{Related Work}
\label{sec:background}
 
Fisac et al. \cite{fisac2019generalsafety} proposes a general safety framework for controlled dynamical systems subject to deterministic (but unknown), state-dependent disturbances $d(x)$. Hamilton-Jacobi reachability methods are combined with Bayesian inference to generate a safe control policy that ensures the system remains within a predefined safe set. The state-dependent disturbances are assumed to be drawn from a Gaussian process and new observations are incorporated as the system samples additional disturbances. Equipped with a model of $d(x)$, the authors construct a probabilistic bound over the space of disturbances and incorporate this into the computation of a non-static safe set and optimal safety controller. 

In the stochastic dynamics setting, several studies have been proposed in which an initial set satisfying certain (static) safety conditions is obtained. Reachability of stochastic systems has been studied through stochastic viability and target problems \cite{aubin2000stochasticviability, soner2002stochastictarget}. A connection between stochastic optimal control (the exit-time problem) and the reach-avoid problem for controlled diffusion processes is presented by Esfahani et al. \cite{esfahani2016reach}. Here, the work proposes a method for computing the set of initial states where there exists an admissible control scheme such that the system hits a desired set prior to entering an avoid set. Notably, the set of initial states are characterized by the super level sets of the viscosity solution of a suitable Hamilton-Jacobi-Bellman equation. 

Occupation measure approaches for computing the region of attraction (ROA) in deterministic systems have been studied in \cite{korda2012Inner} and \cite{korda2014ROA}. Korda et al. \cite{korda2012Inner} analyzes the ROA for deterministic nonlinear (polynomial) dynamical systems and show a linear programming approach for approximating the region. The authors propose an optimization over occupation measures and describe the nonlinear system dynamics through an equivalent linear evolution equation over measures. Using a similar analysis over measures, the notion of $p$-safety for stochastic systems is developed by Wisniewski et al. \cite{wisniewski2015safety, wisniewski2020pSafe} and is most closely related to our methodology. Under the notion of $p$-safety, one analyzes the set of initial conditions for which the system is safe with probability at least $p$. A starting state $s_0$ is said to be $p$-safe if the probability of trajectories initiating from $s_0$ reaching an unsafe set is less than $1-p$.  The evolution equation of the occupation measure is applied to stochastic polynomial dynamics. Similar to the approach we use, $p$-safety employs the linear evolution to link the initial, final, and occupation measures of the stochastic system and formulates an infinite-dimensional optimization that is solved by the generalized moment method (returning the largest $p$-safe set) \cite{wisniewski2020pSafe}.

Numerical approaches for the exit time problem applied to Markov processes have also been studied \cite{lasserre2004sdpvslp} \cite{helmes2001lpmoments}. At their core, the methods aim to characterize the exit time problem via an infinite-dimensional convex program subject to constraints derived from the martingale characterization of Markov processes \cite{ethier2009markovprocess}. Relating to the characterization found in \cite{korda2012Inner}, the evolution of functionals over these processes is formulated using the moments of the occupation and exit location measures (described through the \textit{basic adjoint equation}), reducing the analysis to the space of moment sequences of these measures. An SDP approach to the moment problem is proposed by Lasserre et al. \cite{lasserre2008sdpGPM}. By replacing the LP Hausdorff moment constraints with SDP constraints, the approach showed increased computational efficiency and accuracy. Over the following decade, the SDP approach was utilized in a number of applications in a diverse set of fields, ranging from chemical/biomolecuar dynamics to finance and economics \cite{lasserre2006PricingAC} \cite{jubril2013econIEEE} \cite{yukata2020CMEkinetics} \cite{garrett2018CMEdynamicbounds}. In order to derive the appropriate linear constraints, these works consider polynomial dynamics which are often naturally afforded due to the underlying system characteristics. 

Scalability issues are well known and discussed in the literature applying the SDP approach to solve real world systems. For instance, the moment approach is used to find bounds of the survival time of chemical systems using PSD constraints \cite{matsunaga2021CME}. In the examples presented, a low moment degree is used and the authors make explicit the limitations on scalability due to the combinatorial growth of moments when the state dimension increases. Numerical stability is also questioned as higher moments quickly begin to differ by multiple orders of magnitude. Likewise, researchers have applied the SDP method to the analysis of economic-emission dispatch and discuss the limitations to small/medium sized problems due to the quantity of computations involving a large number of high dimensional matrices \cite{jubril2014economicSDP}.

In this paper, we consider the SDP based moment method for evaluating the safety of stochastic dynamical systems and use state augmentation to extend beyond polynomial dynamics. By reformulating the constraints of the SDP, we aim to reduce the number of large dimensional matrices present in the optimization and allow for applications with a greater number of moments that yield bounds with higher accuracy. 
\section{System Model}
\label{sec:system_model}
\subsection{Notation}
\label{sec:notation}

For two values $a,b \in \mathbb{R}$, we define $a \ \wedge \ b \vcentcolon= \min\{a,b\}$. Given a set $A$, we denote its complement by $A^c$ and its boundary by $\partial A$. The Borel $\sigma$-algebra on a topological space $\mathcal{A}$ is denoted by $\mathscr{B}(\mathcal{A})$ and for $B \in \mathscr{B}(\mathcal{A})$, the indicator function is denoted by $\mathbbm{1}_B$ and defined as $\mathbbm{1}_B(x)=1$ if $x \in B$ and 0 otherwise. The support of a measure $\mu$ on a measurable space $(\mathcal{A}, \mathscr{B}(\mathcal{A}))$ is denoted by $\text{supp}(\mu)$. For a process $X = (X_t)_{t \geq 0}$ described via stochastic differential equation in $\mathbb{R}^n$, we denote $\hat{X}$ as the state augmented version of $X$ given by an SDE in $\mathbb{R}^{n+s}, s>0$. The set of integers $\{1,2, \dots N\}$ is denoted by $[N]$. We represent the $n$-dimensional multi-index $\alpha$ as a tuple such that $\alpha=(\alpha_1, \alpha_2, \ldots \alpha_n)$. The set of such $n$-dimensional multi-indexes is denoted by $\mathbb{N}^n$. Lastly, the monomial with degree corresponding to the multi-index $\alpha$ is given by $(x_1, x_2,...,x_n)^\alpha$ such that $(x_1, x_2,...,x_n)^\alpha$ = $x_1^{\alpha_1} x_2^{\alpha_2} ... x_n^{\alpha_n}$.

\subsection{Stochastic Differential Equations}
\label{sec:model}

We consider the $\mathbb{R}^n$ valued stochastic differential equation:

\begin{equation}
dX_t = h(X_t, t)dt + \sigma (X_t, t)dB_t\\[7pt]
\label{eq:sde}
\end{equation}

\noindent where $X_t \in E \subseteq \mathbb{R}^n$, $X_0 = x_0$ is known, $0 \leq t \leq T$, and $T>0$. Let $B_t$ be a standard $d$-dimensional Brownian motion and $dB_t$ represent its differential form. Let the functions $h : \mathbb{R}^n \times [0,T] \rightarrow \mathbb{R}^n$ and $\sigma : \mathbb{R}^n \times [0,T] \rightarrow \mathbb{R}^{n \times d}$ represent the drift and diffusion terms of the SDE, respectively. Furthermore, let the functions be measurable and satisfy the space variable growth condition:

\vspace{-0.3cm}
$$
|h(x,t)| + |\sigma(x,t)| \leq C(1+|x|) \quad x \in \mathbb{R}^n, t \in [0,T]\\[3pt]
$$

\noindent for some constant $C$, as well as the space variable Lipschitz condition:

\vspace{-0.5cm}
$$
|h(x,t) - h(y,t)| + |\sigma(x,t) - \sigma(y,t)| \leq D|x-y|
$$
$$
x,y\in\mathbb{R}^n, t\in[0,T] \\[6pt]
$$
\vspace{-0.7cm}

Under these circumstances, the stochastic differential equation (\ref{eq:sde}) has a unique time continuous solution starting at time $t$ and state $x_0$ \cite[Theorem~5.2.1]{oksendal2003sde}. In addition, the stochastic process $X = (X_t)_{t\geq 0}$, given by the SDE (\ref{eq:sde}), with initial condition $X_0 = x_0$ with probability one, is a Markov process with continuous sample paths \cite[Theorem~7.1.2]{oksendal2003sde}.

We consider a state space $E \subseteq \mathbb{R}^n$ that is partitioned into two sets: $S$ and $S^c$. Here, $S$ is an open and bounded safe set and $S^c = E - S$ is its complement (unsafe set). In this paper, $\tau$ is a stopping time defined with respect to $S^c$ and is the minimum of the first time that the process $X$ reaches the unsafe set: 

\begin{equation}
    \tau = \inf\{t \mid X_t \in S^c\} \\[7pt]
    \label{eq:tau}
\end{equation}

Throughout the rest of this paper, we will be concerned with a finite exit time $\tau \wedge T$. Intuitively speaking, if the exit time $\tau \wedge T$ is strictly less than $T$, then the system has become unsafe within the time horizon we are concerned with. While if $\tau \wedge T = T$, the system has stayed safe almost surely for the entire finite duration we are examining. 
\section{Computation of Exit Time Moments}
\label{sec:computation}

%%%%%%%%%%%%%%%%%%%%%%%%%%%%%%%%%%%%%%%%%%%%%%%%%%%%%%%%%%%%%%%%%%%

In this section, we describe methods for exit moment computation through an infinite-dimensional optimization program \cite{lasserre2004sdpvslp} \cite{helmes2001lpmoments}.  

\subsection{Linear Evolution Equation}
\label{sec:mt_constraints}
Let $(X_t)_{t \geq 0}$ be a time-homogeneous diffusion in $\mathbb{R}^n$ such that its dynamics are given by the following SDE:

\begin{equation}
    dX_t = h(X_t)dt + \sigma(X_t)dB_t
    \label{eq:th_sde}
\end{equation}

We use $P^x$ to denote the probability laws of $(X_t)_{t \geq 0}$ such that $P^x$ gives the distribution of $(X_t)_{t \geq 0}$ when $X_0=x$. Furthermore, let $\mathbb{E}^x$ denote the expectation w.r.t the probability law $P^x$. The infinitesimal generator $A$ of $X_t$ is defined as \cite[Definition~7.3.1]{oksendal2003sde}:

\begin{equation*}
    Af(x) = \lim_{t\downarrow 0} \frac{\mathbb{E}^x[f(X_t)]-f(x)}{t}\\[6pt]
\end{equation*}

The set of functions $f:\mathbb{R}^n \rightarrow \mathbb{R}$ such that the above limit exists for all $x \in \mathbb{R}^n$ is the denoted as the domain $D(A)$.

The generator of a time-homogeneous It\^{o} diffusion in $\mathbb{R}^n$ for twice differentiable continuous $f$ is \cite[Theorem~7.3.3]{oksendal2003sde}:

\begin{equation}
Af(x) = \sum_i h_i(x)\frac{\partial f}{\partial x_i} + \frac{1}{2}\sum_{i,j}(\sigma \sigma^\intercal)_{i,j}
(x) \frac{\partial ^2 f}{\partial x_i \partial x_j}
\label{eq:gen_th_ito}
\end{equation}
\label{prop:ito_diff_generator}

The dynamics of (\ref{eq:th_sde}) are lifted to the space of measures to define a set of linear evolution equations. The process $(X_t)$ given by the SDE (\ref{eq:th_sde}) satisfies the martingale problem where: 

\begin{equation}
f(X_t) - f(X_0) - \int_0^t Af(X_s)ds\\[7pt]
\label{eq:martingale_problem}
\end{equation}

\noindent is a martingale for all test functions $f \in D(A)$. The first moment of the exit time $\tau \wedge T$ remains finite. Combined with the martingale property of (\ref{eq:martingale_problem}), we have: 

\begin{equation}
\mathbb{E}^{x_0}[f(X_{\tau \wedge T})] - \mathbb{E}^{x_0}[f(X_0)] - \mathbb{E}^{x_0} \left[ \int^{\tau \wedge T}_0 Af(X_s)ds \right] = 0\\[7pt]
\label{eq:mt_problem_expectation}
\end{equation}
Here, the notation $E^{x_0}$ serves to emphasize that $X_0 = x_0$.

Let $\mu_0$ be the expected occupation measure up to the exit time $\tau \wedge T$ of the process $(X_t)$, and $\mu_1$ be its exit location distribution:

\begin{equation*}
    \begin{gathered}
        \mu_0(B) = \mathbb{E} \int_0^{\tau \wedge T} \mathbbm{1}_B(X_t)dt \\[6pt]
        \mu_1(B) = \mathbb{P}(X_{\tau \wedge T} \in B)\\[7pt]
    \end{gathered}
\end{equation*}

The measures $\mu_0$ and $\mu_1$ are supported on the safe set $S$ and safe set boundary $\partial S$, respectively ($\text{supp}(\mu_0)=S$, $\text{supp}(\mu_1)=\partial S$). Equation (\ref{eq:mt_problem_expectation}) is now rewritten as:

\begin{equation}
\int_{\partial S}f(x)\mu_1(dx) - f(x_0) - \int_{S}Af(x)\mu_0(dx) = 0\\[5pt]
\label{eq:basic_adjoint_equation}
\end{equation}

\noindent for every test function $f \in D(A)$ and $X_0 = x_0 \in S$.
Equation (\ref{eq:basic_adjoint_equation}) represents a linear evolution equation linking the occupation and exit measures of the process $(X_t)$, also referred to as the \textit{basic adjoint equation} \cite{helmes2001lpmoments}. 

The moments of the measures $\mu_0$ and $\mu_1$ are given by:

$$
m_i = \int_S x^i \mu_0(dx) \quad \text{and} \quad b_i = \int_{\partial S} x^i \mu_1(dx)\\[7pt]
$$

\noindent where each $i \in \mathbb{N}^n$ is an $n$-dimensional multi-index and $x^i = x_1^{i_1} x_2^{i_2} \cdot\cdot\cdot x_n^{i_n}$. Notice the first moment of the exit time is $m_0$. Under processes where monomial test functions $f$ produce a polynomial infinitesimal generator $Af$, the conditions imposed by the basic adjoint equation are further relaxed from the space of all functions $f \in D(A)$ to all monomials $f$, and expressed through the sequence of moments of $\mu_0$ and $\mu_1$: $[m_i]_{i\in\mathbb{N}^n}$ and $[b_i]_{i\in\mathbb{N}^n}$. The relaxed condition:

\begin{equation}
\sum_j c_j(i) \cdot m_j + x_0^i - b_i = 0 
\label{eq:mt_const}
\end{equation}

\noindent is imposed for every $i \in \mathbb{N}^n$ and monomial $f(x) = x^i$. The condition (\ref{eq:mt_const}) gives a set of linear constraints involving the moments of the exit time and exit distribution.

%%%%%%%%%%%%%%%%%%%%%%%%%%%%%%%%%%%%%%%%%%%%%%%%%%%%%%%%%%%%%%%%%%%
\subsection{SDP Moment Constraints}
\label{sec:sdp_moment_constraints}
The martingale constraints (\ref{eq:mt_const}) alone are not able to guarantee the sequences $[m_i]$ and $[b_i]$ are \textit{moment} sequences with respect to the appropriate occupation and exit measures. In order to enforce that the sequences are \textit{moment} sequences, additional conditions must be imposed. Helmes et al. \cite{helmes2001lpmoments} considers linear moment constraints while Lasserre et al. \cite{lasserre2004sdpvslp} derives SDP conditions. We choose to use the SDP conditions as they have been shown to provide greater precision and reduced computational requirements. 

Let $[m_\alpha]$ be a sequence where $\alpha\in\mathbb{N}^n$ is a multi-index. The sequence is sorted according to the graded lexicographic order where $\alpha$ represents a monomial $x^\alpha$.

\begin{definition}[Graded Lexicographic Order]
An $n$-dimensional multi-index $\gamma$ is represented by a tuple:
$\gamma=(\gamma_1, \gamma_2, \dots \gamma_n) \in \mathbb{N}^n$. 
The degree of $\gamma$ is given by:

\begin{equation*}
    \text{Deg}(\gamma)=\sum_{i=1}^n\gamma_i \\[6pt]
\end{equation*}

\noindent A multi-index $\alpha$ precedes another multi-index $\beta$ in graded lexicographic order if $\text{Deg}(\alpha) < \text{Deg}(\beta)$ and $\alpha, \beta \in \mathbb{N}^n$. If $\text{Deg}(\alpha) = \text{Deg}(\beta)$, $\alpha$ precedes $\beta$ if the leftmost non-zero entry of the element wise difference $\alpha - \beta$ is positive. 
\end{definition}

\begin{example}
The 3-dimensional indexed moment sequence $[m_\alpha]$, $\alpha \in \mathbb{N}^3$ is given by:

\begin{equation*}
\begin{gathered}[]
    [m_\alpha] = [m_{000}, m_{100},  m_{010},  m_{001},  m_{200},\\
    m_{110},  m_{101},  m_{020},  m_{011},  m_{002}, ...]     
\end{gathered}
\end{equation*}
\end{example}

Given a moment sequence, the moment matrix $M_k(m)$ is defined as follows:

\begin{equation*}
    M_k(m)(i,j) = m_{\alpha + \beta}    
\end{equation*}

\noindent where
\begin{equation*}
\begin{gathered}[]
    M_k(m)(1,j) = m_{\alpha} \\
    M_k(m)(i,1) = m_{\beta} \\[7pt]
\end{gathered}
\end{equation*}

In other words, the top most row and left most column of $M_k(m)$ (i.e. $M_k(m)(0,\cdot)$ and $M_k(m)(\cdot, 0)$) consist of the elements of $[m_\alpha]$ up to degree $k$. 

\begin{example}
Let $x \in \mathbb{R}^2$. The second degree moment matrix $M_2(m)$ is given by:
\vspace{0.25cm}
\begin{equation*}
M_2(m) = 
\begin{bmatrix}
m_{00} & m_{10} & m_{01} & m_{20} & m_{11} & m_{02} \\
m_{10} & m_{20} & m_{11} & m_{30} & m_{21} & m_{12} \\
m_{01} & m_{11} & m_{02} & m_{21} & m_{12} & m_{03} \\
m_{20} & m_{30} & m_{21} & m_{40} & m_{31} & m_{22} \\
m_{11} & m_{21} & m_{12} & m_{31} & m_{22} & m_{13} \\
m_{02} & m_{12} & m_{03} & m_{22} & m_{03} & m_{04} \\
\end{bmatrix}    
\end{equation*}
\end{example}

\vspace{0.35cm}

Next the localizing matrix $M_k(qm)$ is defined with respect to a polynomial $q$. Let $\beta(i,j)$ be the multi-index of the $i,j$th entry of the moment matrix $M_k(m)$ and let $[q_\alpha]$ be the vector of coefficients of the polynomial $q$ in graded lexicographic order. The entries of the localizing matrix is then given by:

\begin{equation*}
M_k(qm)(i,j) = \sum_\alpha q_\alpha \cdot m_{\beta(i,j) + \alpha}
\end{equation*}

\begin{example}
Let $x \in \mathbb{R}$ and $q(x) := 1+x^2+x^4$. The first degree localizing matrix $M_1(qm)$ is given by:
\vspace{0.25cm}
\begin{equation*}
M_1(qm) = 
\begin{bmatrix}
m_0+m_2+m_4 & m_1+m_3+m_5 \\
m_1+m_3+m_5 & m_2+m_4+m_6 \\
\end{bmatrix}\\[10pt]
\end{equation*}
\end{example}

\subsection{Optimization Program}
\label{sec:optimization_program}
The upper and lower bounds of the expected exit time $\mathbb{E}[\tau]$ of the system (\ref{eq:th_sde}) is computed through the following semidefinite program \cite{lasserre2004sdpvslp}:

\vspace{0.5cm}
\noindent \textbf{Optimization I (Original Constraints)} 
\begin{alignat*}{3}
& \text{Maximize}   & &(\textit{resp. Minimize}):& m_0\\
\cline{1-6}
&  \text{Subject to:}&   &\hspace{0.7cm}x_0^k + \sum_{i \in \mathbb{N}^{n}}c_i(k) \cdot m_i - b_k= 0&\\
&   &   &\hspace{0.7cm}M_k(m) \succcurlyeq 0& \\
&   &   &\hspace{0.7cm}M_k(b) \succcurlyeq 0&\\
&   &   &\hspace{0.7cm}M_k(q_0m) \succcurlyeq 0&\\
&   &   &\hspace{0.7cm}M_k(q_1b) \succcurlyeq 0&\hspace{0.1cm}\forall k \leq K\\
% \cline{1-6}
\end{alignat*}

\noindent where $M_k(m)$ and $M_k(q_0m)$ are the moment and localizing matrices corresponding to the moment sequence of $\mu_0$, and $M_k(b)$ and $M_k(q_1b)$ are the moment and localizing matrices corresponding to the moment sequence of $\mu_1$. The polynomials $q_0, q_1$ with which the localizing matrices are defined with respect to are derived from the semi-algebraic sets $E_1 := \{x \in \mathbb{R}^d \mid q_0(x) \geq 0\}, E_2 := \{x \in \mathbb{R}^d \mid q_1(x) \geq 0\}$, such that the measures $\mu_0, \mu_1$ are supported on $E_1, E_2$, respectively. To make the program numerically tractable, the optimization is restricted to a finite number of moments $K$.

%\vspace{0.25cm}
\section{State Space Augmentation}
\label{sec:augmentation}

In this section we present our state augmentation method using redundant states to support non-polynomial system dynamics. We then discuss the trade-offs in computational complexity when considering higher dimensional state spaces. 

\subsection{Time-Dependent It\^{o} Diffusion}
\label{sec:space_time_system}
In order to compute higher order moments and ensure a finite exit time, the time dimension must be included within the state. The new state $\hat{X}_t \in \mathbb{R}^{n+1}$ is given as $\hat{X}_t = [X_t, t]^{\intercal}$, with dynamics:

\begin{equation}
\begin{aligned}
d\hat{X}_t &= [h(X_t), 1]^\intercal dt + [\sigma(X_t), 0]^\intercal dB_t \\[3pt]
&= \hat{h}(\hat{X}_t) dt + \hat{\sigma}(\hat{X}_t)dB_t\\[6pt]
\end{aligned}
\label{eq:t_aug_sys}
\end{equation}

$\hat{X} = (\hat{X}_t)_{t \geq 0}$ is now an It\^{o} diffusion in $\mathbb{R}^{n+1}$ with initial condition $\hat{x}_0 = (x_0, 0)$. Recall that \cref{eq:mt_problem_expectation} requires consideration of a finite exit time. To address system dynamics which may stay within the safe set for all time $t \in [0,\infty)$, we use a finite time horizon $T$. The safe set of the SDE (\ref{eq:t_aug_sys}) is then $\hat{S} = S \times [0,T]$, which guarantees a finite exit time. Applying the evolution equation (\ref{eq:basic_adjoint_equation}) to $\hat{X}$ yields the basic adjoint equation:

\vspace{0.1cm}
\begin{equation*}
\begin{gathered}
    \int_{\partial \hat{S}}f(x,s)\mu_1(dx \times ds) - f(x_0, 0) \\
     -\int_{\hat{S}}Af(x,s)\mu_0(dx \times ds) = 0 \\[5pt]
\end{gathered}
\end{equation*}

In view of the SDE (\ref{eq:t_aug_sys}) and generator (\ref{eq:gen_th_ito}) associated with the It\^{o} diffusion, we observe that the operator $A$ is composed of differential and summation operations (with respect to the state variables). Thus stochastic dynamics with both polynomial drift $\hat{h}(\cdot)$ and diffusion $\hat{\sigma}(\cdot)$ satisfy the above assumption. As in Section \ref{sec:mt_constraints}, the basic adjoint equation conditions are relaxed from all $f \in D(A)$ to all monomials $f$ to obtain the following \textit{martingale constraints} in terms of the moment sequences $[m_i]_{i \in \mathbb{N}^{n+1}}$ and $[b_i]_{i \in \mathbb{N}^{n+1}}$:

\begin{equation}
    \sum_{j \in \mathbb{N}^{n+1}}\big[ c_j(i) \cdot m_j \big] + \hat{x}_0^i - b_i = 0
    \label{eq:spacetime_mt_const}
\end{equation}

\noindent for every monomial $f(x,s) \in D(A)$, such that $f(x,s) = (x,s)^k$, $k \in \mathbb{N}^{n+1}$. 

The formulation of the moment and localizing matrix constraints remains the same as that of \cref{sec:sdp_moment_constraints}, while the maximization (minimization) variable when computing the higher order moments $\mathbb{E}[\tau^n]$ is now $n\cdot m_{0,n-1}$. The size of matrix for each degree $k$ is increased accordingly (scaled combinatorially with state dimension). Lastly, the constraint on the time dimension $t \in [0,T]$ also adds polynomials to the semi-algebraic safe and boundary sets.

%%%%%%%%%%%%%%%%%%%%%%%%%%%%%%%%%%%%%%%%%%%%%%%%%%%%%%%%%%%%%%%%%%%
\subsection{Augmentation with Redundant States}
The restriction to polynomial drift and diffusion dynamics exclude a large class of real world systems that exhibit other nonlinear behaviors in their dynamics model. In particular, physical systems operating in multidimensional space often incorporate sinusoidal dynamics which are used to specify force components acting on the system with respect to a particular coordinate frame. Sinusoids are also found in the rotation matrices used to transform agents into a global frame of reference. For example, the dynamics of the Dubins car depend on the sine/cosine of the heading angle of the car, while those of a quadcopter depend on the sine/cosine of it's roll, pitch, and yaw. 

In order to support these dynamics for $X$ (including sinusoidal and natural exponential functions), the assumption of the infinitesimal generator mapping monomial test functions $f$ to polynomials must be broken. As a result, one is unable to relax the basic adjoint equation and generate a series of constraints based on the moment sequences of $\mu_0$ and $\mu_1$. In this section, we provide a state augmentation technique to restore this desired property of the generator.

\begin{definition}
Given a stochastic process $X$ in $\mathbb{R}^n$, we say that $X$ is \textit{closed under infinitesimal generation} if $Af(x)$ is a polynomial with respect to the state variables for all monomial functions $f(x) = x^i$, $i \in \mathbb{N}^n$.
\label{def:closed_under_generation}
\end{definition}

\begin{prop}
The time dependent It\^{o} diffusion $\hat{X}$ in $\mathbb{R}^{n+1}$ described via the SDE (\ref{eq:t_aug_sys}) is closed under infinitesimal generation if for each drift term $\hat{h}_i(\hat{x})$, $0 \leq i < n+1$, and diffusion term $\hat{\sigma}_{j,k}(\hat{x})$, $0 \leq j,k < n+1$, there exists $n+1$ dimensional multi-index sets $P,Q$, such that $\hat{h}_i(\hat{x}) = \sum_{p \in P}c_p\hat{x}^p$ and $(\hat{\sigma}\hat{\sigma}^\intercal)_{j,k}(\hat{x}) = \sum_{q \in Q}c_q\hat{x}^q$, where $c_p, c_q \in \mathbb{R}$. \footnote{Assuming $dB_t$ has dimension $n+1$}
\label{prop:SDE_closed_under_generation}
\end{prop}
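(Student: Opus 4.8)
The plan is to reduce the statement to a direct verification using the explicit form of the generator in \cref{eq:gen_th_ito}, exploiting the fact that partial differentiation maps monomials to monomials. By \cref{def:closed_under_generation}, it suffices to show that $Af(\hat{x})$ is a polynomial in the state variables $\hat{x} = (x,s) \in \mathbb{R}^{n+1}$ whenever $f(\hat{x}) = \hat{x}^k$ is a monomial with $k \in \mathbb{N}^{n+1}$. First I would write out the generator of $\hat{X}$ acting on such a monomial. Since $\hat{X}$ is a time-homogeneous It\^{o} diffusion in $\mathbb{R}^{n+1}$ (the time component having been absorbed into the state via \cref{eq:t_aug_sys}), \cref{eq:gen_th_ito} applies with $n$ replaced by $n+1$ and yields
\begin{equation*}
A\hat{x}^k = \sum_{i} \hat{h}_i(\hat{x}) \frac{\partial \hat{x}^k}{\partial \hat{x}_i} + \frac{1}{2}\sum_{i,j}(\hat{\sigma}\hat{\sigma}^\intercal)_{i,j}(\hat{x}) \frac{\partial^2 \hat{x}^k}{\partial \hat{x}_i \partial \hat{x}_j},
\end{equation*}
which is well defined because every monomial is infinitely differentiable and therefore lies in the class to which \cref{eq:gen_th_ito} applies.

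Next I would record the elementary fact that differentiation sends monomials to monomials: $\tfrac{\partial}{\partial \hat{x}_i}\hat{x}^k = k_i\,\hat{x}^{k-e_i}$ when $k_i \ge 1$ and $0$ otherwise (writing $e_i$ for the $i$-th unit multi-index), and analogously for the second-order terms, each of which is again a scalar multiple of a monomial or zero. Combining this with the hypothesis that each $\hat{h}_i$ and each $(\hat{\sigma}\hat{\sigma}^\intercal)_{j,k}$ is a polynomial --- that is, a finite $\mathbb{R}$-linear combination of monomials $\hat{x}^p$, $p \in P$, respectively $\hat{x}^q$, $q \in Q$ --- every summand above becomes a product of a polynomial with a monomial, hence itself a polynomial. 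Because $A\hat{x}^k$ is a finite sum of such products, with $i$ and the pair $(i,j)$ ranging over $\{0,\dots,n\}$, it is a polynomial in $\hat{x}$, and the conclusion follows by \cref{def:closed_under_generation}.

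I do not expect a genuine obstacle here; the only care needed is bookkeeping, and the substantive content is simply that the class of polynomials is closed under the three operations appearing in $A$: scalar multiplication, multiplication, and finite addition. I would make two points explicit. First, the hypothesis is phrased on the entries $(\hat{\sigma}\hat{\sigma}^\intercal)_{j,k}$ rather than on $\hat{\sigma}$ itself precisely because it is the matrix $\hat{\sigma}\hat{\sigma}^\intercal$ (square of size $n+1$, by the footnote) that enters the generator. Second, the edge cases in which a derivative vanishes or lowers the degree produce only the zero polynomial or lower-degree monomials and therefore cause no difficulty.
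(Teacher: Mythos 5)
Your proof is correct and follows essentially the same route as the paper: the paper justifies this proposition by observing that the generator \eqref{eq:gen_th_ito} is built from differentiation and summation applied to the (polynomial) drift and diffusion entries, which is exactly the closure argument you spell out --- differentiation maps monomials to scalar multiples of monomials, and polynomials are closed under products and finite sums. Your added bookkeeping (the unit multi-index $e_i$, the vanishing-derivative edge cases, and the remark that the hypothesis is correctly placed on $(\hat{\sigma}\hat{\sigma}^\intercal)_{j,k}$ rather than $\hat{\sigma}$) only makes explicit what the paper leaves implicit.
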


In the cases where the dynamics of the SDE violate the requirements in Proposition \ref{prop:SDE_closed_under_generation} we propose an augmentation technique where the state space is extended with redundant variables. The augmentation is chosen such that the expanded state space now includes the non-polynomial (w.r.t the state) terms of the drift and diffusion components, as well as possibly their derivatives. 

Let $\{c_j(X_t)\}_{j \in [0,J-1]}$ be the set of coefficients (along with possibly their derivatives) of the generator $Af(x,s)$ that violate the generator polynomial mapping assumption:

\begin{equation}
    f \mapsto Af(x,s) = \sum_{i \in \mathbb{N}^{n+1}} c_i(k) \cdot (x,s)^i\\[5pt]
    \label{eq:polynomial_assumption}
\end{equation}

Note that the $c_j$'s are partially determined by the coefficients of the partial derivatives found in the infinitesimal generator and correspond to the drift and diffusion terms of the SDE. We consider the \textit{augmented state space} $\hat{X}_t \in \mathbb{R}^{J+n+1}$: 

\begin{equation*}
\hat{X}_t = 
[X_t, t, c_0(X_t), \cdot\cdot\cdot, c_J(X_t)]^\intercal \\[7pt]
\end{equation*}

\noindent with corresponding dynamics:

\begin{equation*}
\begin{gathered}
d\hat{X}_t = \hat{h}(\hat{X}_t) dt + \hat{\sigma}(\hat{X}_t)dB_t \\[7pt]
= \begin{bmatrix}
h(X_t) \\
1\\
\partial c_0(X_t)\\
\vdots\\
\partial c_{J-1}(X_t)
\end{bmatrix} dt + \begin{bmatrix}
\sigma(X_t) \\
0\\
\sigma_0(X_t)\\
\vdots\\
\sigma_{J-1}(X_t)
\end{bmatrix} dB_t \\[7pt]
\end{gathered}
\label{eq:coeff_aug}
\end{equation*}

In the case of sinusoidal dynamics, the functions $\sin$ and $\cos$ form a length 4 cycle under the derivative operator. The cycle property allows us to consider an augmentation consisting of states which cover all unique sinusoidal frequencies and phases from the original dynamics. We can employ this characteristic to formulate a general augmentation methodology for all multidimensional SDE dynamics where the drift and diffusion terms are polynomials with respect to sinusoidal dynamics and the state variables.

\begin{theorem}
Let $X$ be a process with state $\mathbf{x} \in \mathbb{R}^n$ and sinusoidal drift and diffusion such that the dynamics are:

\begin{equation}
    dX_t = \begin{bmatrix}
    h_1(X_t) \\
    h_2(X_t) \\
    \vdots \\
    h_n(X_t)
    \end{bmatrix}dt + 
    \begin{bmatrix}
    \sigma_{1,1}(X_t)&...&\sigma_{1,d}(X_t)\\
    \sigma_{2,1}(X_t)&...&\sigma_{2,d}(X_t)\\
    \vdots\\
    \sigma_{n,1}(X_t)&...&\sigma_{n,d}(X_t)
    \end{bmatrix}dB_t
    \label{eq:theorem_v1}
\end{equation}

\noindent where

\begin{equation*}
\begin{aligned}
    h_i(X_t) &= \sum_p \alpha_p^{(i)} (\sin(\phi_p\mathbf{x}^{\gamma_p}), \cos(\psi_p\mathbf{x}^{\gamma_p}), \mathbf{x}^{\gamma_p})^{\beta_p} \\
    \sigma_{ij}(X_t) &= \sum_q \alpha_q^{(i,j)} (\sin(\phi_q\mathbf{x}^{\gamma_q}), \cos(\psi_q\mathbf{x}^{\gamma_q}), \mathbf{x}^{\gamma_q})^{\beta_q}
    \end{aligned}
\end{equation*}

\noindent where each $\alpha_{(\cdot)} \in \mathbb{R}$ is a scalar coefficient, $\beta_{(\cdot)} \in \mathbb{N}^n$ is a multi-index, $\mathbf{x}^{\gamma_{(\cdot)}}$ is a monomial with respect to the state $\mathbf{x}$ given by a multi-index $\gamma_{(\cdot)} \in \mathbb{N}^n$, and $\phi_{(\cdot)} \in \Phi$, $\psi_{(\cdot)} \in \Psi$ are a finite set of frequencies. Let $\mathbf{\hat{x}}$ denote the sinusoidal augmented state space such that:

\vspace{0.1cm}
\begin{equation*}
\begin{gathered}[]
    \mathbf{\hat{x}} = [\mathbf{x}, \sin(\phi_1\mathbf{x}^{\gamma_1}),  \sin(\phi_2\mathbf{x}^{\gamma_2}), ... \sin(\phi_m\mathbf{x}^{\gamma_m}),\\
    \sin(\psi_1\mathbf{x}^{\gamma_1}),  \sin(\psi_2\mathbf{x}^{\gamma_2}), ... \sin(\psi_m\mathbf{x}^{\gamma_m}),\\
    \cos(\phi_1\mathbf{x}^{\gamma_1}),\cos(\phi_2\mathbf{x}^{\gamma_2}),... \cos(\phi_m\mathbf{x}^{\gamma_m})\\
    \cos(\psi_1\mathbf{x}^{\gamma_1}),\cos(\psi_2\mathbf{x}^{\gamma_2}),... \cos(\psi_m\mathbf{x}^{\gamma_m})]     
\end{gathered}
\end{equation*}
\vspace{0.2cm}

\noindent where $\phi_1, \cdots \phi_m \in \Phi$ and $\psi_1, \cdots \psi_m \in \Psi$. Then, the augmented state $\mathbf{\hat{x}}$ has dimension $2(|\Phi|+|\Psi|) + n$ and the augmented system $\hat{X}$ satisfies \cref{def:closed_under_generation} --- In other words, the augmented state includes sine and cosine terms for all unique frequencies found in the dynamics of $X$.
\label{thm:SDE_aug}
\end{theorem}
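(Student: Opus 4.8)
The plan is to reduce the statement to the sufficient condition of \cref{prop:SDE_closed_under_generation}: it suffices to show that, expressed as functions of the augmented state $\mathbf{\hat{x}}$, every drift component and every entry of the diffusion product $\hat{\sigma}\hat{\sigma}^\intercal$ of the augmented system $\hat{X}$ is a polynomial in the coordinates of $\mathbf{\hat{x}}$. Since $\hat{X}_t = F(X_t)$ for the smooth map $F$ that appends the sinusoidal coordinates to $\mathbf{x}$, the dynamics of $\hat{X}$ follow from those of $X$ by It\^{o}'s formula; and because $\mathbf{\hat{x}}$ retains $\mathbf{x}$ among its first $n$ coordinates, each resulting drift and diffusion term is a function of $\mathbf{x}$ and hence of $\mathbf{\hat{x}}$. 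The task is therefore purely to check that these functions are polynomial once the sinusoidal atoms are treated as independent coordinates.

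First I would treat the block of $\hat{X}$ corresponding to the original coordinates $\mathbf{x}$, whose drift and diffusion are exactly the $h_i$ and $\sigma_{ij}$ of \cref{eq:theorem_v1}. By hypothesis each $h_i$ and $\sigma_{ij}$ is a $\beta$-indexed polynomial in the atoms $\sin(\phi_p\mathbf{x}^{\gamma_p})$, $\cos(\psi_p\mathbf{x}^{\gamma_p})$ and the monomials $\mathbf{x}^{\gamma_p}$. Each $\mathbf{x}^{\gamma_p}$ is already a monomial in the first $n$ coordinates of $\mathbf{\hat{x}}$, and by construction every sinusoidal atom appearing here is itself a coordinate of $\mathbf{\hat{x}}$. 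Hence $h_i$ and $(\sigma\sigma^\intercal)_{ij}=\sum_k \sigma_{ik}\sigma_{jk}$ are polynomials in $\mathbf{\hat{x}}$.

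The substantive step is the block of newly introduced sinusoidal coordinates. Writing a generic new coordinate as $z = \sin(\phi\mathbf{x}^{\gamma})$ (the cosine case being identical), the generator \cref{eq:gen_th_ito} gives its drift as $\sum_i h_i\,\partial_{x_i}z + \tfrac{1}{2}\sum_{i,j}(\sigma\sigma^\intercal)_{ij}\,\partial_{x_i}\partial_{x_j}z$ and its diffusion entries as $\sum_i \partial_{x_i}z\,\sigma_{ik}$. Here I would invoke the length-four derivative cycle of $\{\sin,\cos\}$ together with the chain rule: $\partial_{x_i}\sin(\phi\mathbf{x}^{\gamma}) = \phi\,(\partial_{x_i}\mathbf{x}^{\gamma})\cos(\phi\mathbf{x}^{\gamma})$, while the second derivatives produce $\sin(\phi\mathbf{x}^{\gamma})$ and $\cos(\phi\mathbf{x}^{\gamma})$ multiplied by the polynomials $\partial_{x_i}\mathbf{x}^{\gamma}$ and $\partial_{x_i}\partial_{x_j}\mathbf{x}^{\gamma}$. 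The crucial observation — which I expect to be the main obstacle to state cleanly rather than to prove — is that differentiation never alters the frequency $\phi$ nor the monomial multi-index $\gamma$ inside the argument; it only toggles sine and cosine and multiplies by a derivative of $\mathbf{x}^{\gamma}$, which is again polynomial in $\mathbf{x}$. Consequently every sinusoidal atom generated is one of $\sin(\phi\mathbf{x}^{\gamma})$ or $\cos(\phi\mathbf{x}^{\gamma})$ with $\phi\in\Phi\cup\Psi$, i.e. already a coordinate of $\mathbf{\hat{x}}$, so the finite set of atoms is closed under the differential operations appearing in the generator. Combined with the first block, every drift and diffusion-product entry of $\hat{X}$ is polynomial in $\mathbf{\hat{x}}$.

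To finish I would apply \cref{prop:SDE_closed_under_generation} to conclude that $\hat{X}$ satisfies \cref{def:closed_under_generation}, and verify the dimension count by noting that the augmentation appends exactly one sine and one cosine coordinate for each of the $|\Phi|$ and $|\Psi|$ distinct frequencies, giving $2(|\Phi|+|\Psi|)$ new coordinates on top of the original $n$.
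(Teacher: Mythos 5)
Your proposal is correct and takes essentially the same route as the paper's proof: apply It\^{o}'s formula to each appended sinusoidal coordinate, observe that differentiation preserves the frequency and the monomial argument while only toggling sine and cosine (so every atom generated is already a coordinate of $\mathbf{\hat{x}}$), and conclude by closure of polynomials under sums and products that the augmented drift and diffusion entries, and hence the generator on monomial test functions, are polynomial in $\mathbf{\hat{x}}$. The only cosmetic difference is that you finish by invoking \cref{prop:SDE_closed_under_generation}, whereas the paper re-derives that sufficient condition inline for the augmented system.
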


\noindent \textit{Proof}: See Appendix \ref{appendix:state_aug_proof}.

\newcommand{\xhs}{x^{\gamma_z}}
\newcommand{\xh}{\mathbf{x}^{\gamma_h}}
\newcommand{\bx}{\mathbf{x}}
\renewcommand{\qedsymbol}{$\blacksquare$}

\vspace{0.3cm}

An example of obtaining the martingale constraints through state space augmentation for a time dependent SDE is given in Appendix \ref{example:state_aug}.

With the appropriate state augmentation, we may now obtain the martingale constraints (\ref{eq:spacetime_mt_const}) in terms of the moment sequences for previously unsupported nonlinear dynamics. In performing the augmentation involving sinusoidal terms, if the original system dynamics has $|\Phi| + |\Psi|$ unique modes, then the dimension of the state increases by $2(|\Phi| + |\Psi|)$. This poses a challenge for the computation of the optimization program in \ref{sec:optimization_program}. As the dynamics increase in complexity with additional sinusoidal modes, the augmentation requires more states resulting in the PSD matrix constraints growing intractable. 
\section{Reduced Constraint Optimization}
\label{sec:reduced_constraint}

In this section we present solutions to mitigate the computational challenges associated with applying the SDP moment method to complex systems.

\subsection{Reformulated Localizing Matrix Constraints}
Following the definition of the moment sequence, it is clear that the number of moments and the size of the moment/localizing matrices scale combinatorially with the dimension of the state space. This poses a challenge to the computational feasibility when high dimensional (possibly state augmented) systems are considered. In addition, the number of localizing matrix constraints grows linearly with the number of polynomials $q_i$ used to specify the safe set and its boundary (while the size of each matrix grows combinatorially). Given a complex safe set with numerous polynomials, this quickly results in an intractable number of extremely large PSD matrix constraints. Therefore, we propose replacing the sequence of localizing matrix constraints for the exit measure with a set of scalar equality constraints over the moment sequence. We add additional assumptions to the polynomials that form the semialgebaric safe set which, in practice, are easily satisfied. 

\begin{definition}
For a polynomial $q$, a \textit{critical point} $x$ is a point where the derivative of $q$ vanishes. A \textit{critical value} of $q$ is an element of the co-domain in the image of some critical point. 
\end{definition}

\begin{prop}
Suppose the semi-algebraic safe set is given by $S = \{x \mid q_i(x) \geq 0, i \in [N]\}$ and 0 is not a critical value of $q_i$ $\forall i \in [N]$, then the boundary is characterized by $\partial S = \{ x \mid \text{ there exists } i \text{ such that } q_i(x) = 0\}$.
\label{prop:safeset_boundary}
\end{prop}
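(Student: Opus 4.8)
The plan is to prove the set equality by two inclusions, working with the closed semi-algebraic description $S = \{x : q_i(x) \ge 0,\ i \in [N]\}$ and reading the right-hand side as the active-constraint locus \emph{inside} $S$, i.e. $\{x \in S : \exists\, i,\ q_i(x) = 0\}$. (A point where some $q_i$ vanishes but another $q_j$ is strictly negative lies outside $S$ and so is irrelevant to $\partial S$; the intersection with $S$ is therefore implicit in the statement.) Since each $q_i$ is a polynomial, hence continuous, $S$ is closed, so $\partial S = S \setminus \operatorname{int}(S)$ and in particular $\partial S \subseteq S$.

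For the forward inclusion $\partial S \subseteq \{x \in S : \exists\, i,\ q_i(x)=0\}$, I would argue by contraposition using only continuity. If $x \in S$ satisfies $q_i(x) > 0$ for every $i$, then by continuity of each $q_i$ there is an open ball about $x$ on which all $q_i$ remain positive; that ball lies in $S$, so $x \in \operatorname{int}(S)$ and thus $x \notin \partial S$. Hence every boundary point must have at least one active constraint. This direction requires no regularity hypothesis.

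The reverse inclusion is where the non-critical-value assumption enters, and it is the main obstacle. Take $x \in S$ with $q_{i_0}(x) = 0$ for some $i_0$; I must show $x \notin \operatorname{int}(S)$, equivalently that every neighborhood of $x$ contains a point violating some constraint. Because $0$ is not a critical value of $q_{i_0}$ and $q_{i_0}(x)=0$, the gradient $\nabla q_{i_0}(x)$ is nonzero. A first-order Taylor expansion along the descent direction $v = -\nabla q_{i_0}(x)$ then gives, for small $t>0$,
\[
q_{i_0}\bigl(x + t v\bigr) = -\,t\,\lvert \nabla q_{i_0}(x)\rvert^2 + o(t) < 0,
\]
so points arbitrarily close to $x$ fall outside $S$. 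Since also $x \in S \subseteq \overline{S}$, every neighborhood of $x$ meets both $S$ and $S^c$, i.e. $x \in \partial S$.

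Finally I would note that no difficulty arises when several constraints are simultaneously active at $x$: the regularity assumption is imposed on \emph{every} $q_i$, so the single active constraint $q_{i_0}$ already supplies a nonzero gradient and a direction leaving $S$, and the behavior of the remaining constraints is immaterial. The hypothesis is essential precisely to rule out tangential contact such as $q(x) = \lVert x\rVert^2$ with $S = \mathbb{R}^n$, where $0$ \emph{is} a critical value and the zero set $\{0\}$ is interior rather than boundary --- exactly the pathology the assumption excludes.
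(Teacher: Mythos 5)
Your proof is correct and follows the same two-inclusion skeleton as the paper's: continuity gives the forward inclusion, and the nonvanishing gradient plus a first-order expansion gives the reverse. However, your execution is more careful than the paper's own proof in two places, and both matter. First, you read the right-hand side as the active locus \emph{inside} $S$, i.e. $\{x \in S : \exists\, i,\ q_i(x)=0\}$; the paper takes the set literally, and as literally written the claim is false --- a point with $q_1(x)=0$ but $q_2(x)<0$ belongs to the stated set yet has a whole neighborhood disjoint from $S$, so it cannot be a boundary point. Second, in the reverse inclusion the paper only argues that $q_i$ takes both signs arbitrarily close to $x$ and concludes $x \in \partial S$; this is insufficient when $N>1$, since a nearby point with $q_i>0$ need not lie in $S$ (another constraint may fail there). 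You close this gap by using $x \in S$ itself as the witness that every neighborhood meets $S$, and the descent direction $-\nabla q_{i_0}(x)$ as the witness that every neighborhood meets $S^c$. So: same route, but your version proves the correctly interpreted statement in full generality, whereas the paper's argument is only airtight for a single constraint (or after exactly the repairs you made); your closing example $q(x)=\lVert x \rVert^2$ also pinpoints why the non-critical-value hypothesis cannot be dropped, which the paper never illustrates.
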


\begin{proof}
Consider the case when $\forall i$ $q_i(x) \neq 0$. If $q_i(x) > 0$, then there must exist $\epsilon > 0$ such that $q_i(x+\epsilon) > 0 \in S$. As a result, $x$ cannot be on the boundary of $S$. Likewise, if $q_i(x) < 0$, then there must exist $\epsilon > 0$ such that $q_i(x+\epsilon) < 0 \notin S$, $x \notin \partial S$. Therefore, by the contrapositive, if $x \in \partial S$ then there must exist an $i$ such that $q_i(x) = 0$.

Now consider the case when there exists an $i$ such that $q_i(x) = 0$. We are given that 0 is not a critical value so $\nabla_xq_i \neq 0$, therefore there exists $\epsilon > 0$ such that either
$$
q_i(x) + \epsilon \cdot \nabla_xq_i > 0 \text{ }\text{ and }\text{ } q_i(x) - \epsilon \cdot \nabla_xq_i < 0
$$
\noindent or
$$
q_i(x) + \epsilon \cdot \nabla_xq_i < 0 \text{ }\text{ and }\text{ } q_i(x) - \epsilon \cdot \nabla_xq_i > 0
$$
As a result, $x$ must be on the boundary of $S$. 
\end{proof}

Given a safe set $S = \{ x \mid q_i(x) \geq 0 , i \in [N]\}$, let the polynomial $q'$ be the product of the $q_i$'s. Using proposition \ref{prop:safeset_boundary}, the boundary can be characterized as:

$$
\partial S = \{ x \mid q'(x) \geq 0 , -q'(x) \geq 0\}
$$

Therefore, all entries of the localizing matrix satisfy $M_k(q'b)(i,j) = 0$, yielding the set of scalar equality constraints:

$$
\sum_\alpha q'_\alpha \cdot b_{\beta(i,j) + \alpha} = 0 \quad \quad len(\textbf{M}(b)) > i,j \geq 0
$$

\noindent where $\textbf{M}(b)$ is the sequence of moments with respect to the exit measure. The new optimization program is given as follows:

\vspace{0.5cm}
\noindent \textbf{Optimization II (Reduced Constraints)} 
\begin{alignat*}{3}
& \text{Maximize}   & &(\textit{resp. Minimize}):& n\cdot m_{0,n-1}\\
\cline{1-6}
&  \text{Subject to:}&   &\hspace{0.7cm}\hat{x}_0^k + \sum_{i \in \mathbb{N}^{n+1}}c_i(k) \cdot m_i - b_k= 0&\\
&   &   &\hspace{0.7cm}M_K(m) \succcurlyeq 0& \\
&   &   &\hspace{0.7cm}M_K(b) \succcurlyeq 0&\\
&   &   &\hspace{0.7cm}M_K(q_im) \succcurlyeq 0&\\
&   &   &\hspace{0.7cm}\sum_\alpha q'_\alpha \cdot b_{\beta(i,j) + \alpha} = 0&\hspace{0.1cm}\forall k \leq K\\\\
% \cline{1-6}
\end{alignat*}

\noindent where the $q_i$'s are given by the polynomials of the semi-algebragic safe set and $q'$ by the product of the $q_i$'s. Due to the symmetry of the remaining moment and localizing matrices, their respective sequences of PSD constraints may be replaced with a single constraint involving the highest moment degree $K$, further reducing the memory requirements during computation. 

\subsection{Computational Consequences (Splitting Conic Solver)}
For intuition on the computational impact of the reformulated optimization, let us consider a widely used SDP solver such as the Splitting Conic Solver (SCS) --- a default solver included as part of the CVXPY convex optimization modelling language for Python \cite{ODonoghue2016ConicOV}. The algorithm consists of three main steps with the primary computational burdens falling upon: 1) Projection onto a subspace by solving a linear system with a coefficient matrix $I + Q$, and 2) projection onto a cone requiring an eigendecomposition. Suppose we are solving the moment method problem consisting of $N$ moments given by $x = [m_0, m_1, m_2...]$, up to a maximum moment degree of $K$. The semi-algebraic safe set $S$ consists of $N_q$ polynomials and the boundary $\partial S$ consists of $2N_q$ polynomials. Let the moment and localizing matrices $M_K(m)$ and $M_K(qm)$ have dimension $d_K$. 

Using Optimization I with original constraints from \cref{sec:optimization_program} (sequence of moment/localizing matrices replaced with highest degree matrix), the PSD constraints are formed using a block diagonal matrix consisting of $M_K(m)$, $M_K(b)$, $\{M_K(q_im)\ \mid i \in [N_q]\}$, and $\{M_K(q_ib)\ \mid i \in [2N_q]\}$, resulting in a PSD matrix constraint of size $(2+3N_q)d_K$ by $(2+3N_q)d_K$. On the other hand, by using Optimization II with reduced constraints, the localizing matrices of the boundary is removed, resulting in a PSD matrix constraint of size $(2+N_q)d_K$ by $(2+N_q)d_K$. This causes a reduction ($\delta$) in the size of the SCS coefficient matrix ($I+Q$) that is proportional to the square of the number of polynomials and dimension of moment/localizing matrix: $\delta = O(N_q^2d_K^2)$. 

Consider a scenario where one is computing the exit time of a multidimensional system with a number of sinusoidal modes. As state augmentation is used to incorporate each of the modes into the state space, the sequence of moments increases combinatorially in length with respect to the new state dimension. This in turn directly increases the dimensions of the moment and localizing matrices ($d_K$). The combinatorial growth in $d_K$ is then squared to contribute an even larger impact on the coefficient matrix of SCS. The growth in computational costs are further exacerbated when complex safe sets consisting of numerous polynomials are used. As a result, even minor applications of state augmentation can have a dramatic impact on the difference in computational feasibility of the moment method when comparing between the original and reduced constraint formulations. 
%%%%%%%%%%%%%%%%%%%%%%%%%%%%%%%%%%%%%%%%%%%%%%%%%%%%%%%%%%%%%%%%%%%%%%%%%%%%%%%%
\section{Examples}
\label{sec:experiments}

\renewcommand{\arraystretch}{1.2}
\begin{table*}
        \caption{BM Exit Time Moments ($K = 8$)}
        \label{table:brownian_motion_moments_k8}
        \centering
        \begin{tabular}{c|c|ll|ll}
            \toprule
            & & \hspace{1.8cm} Original & \hspace{-0.95cm}Constraints & \hspace{1.8cm} Reduced & \hspace{-0.95cm}Constraints \\\cline{3-6}
            Moment & Analytical Value & \hspace{0.5cm} Lower Bound \hspace{1.2cm} & Upper Bound \hspace{0.5cm} & \hspace{0.5cm} Lower Bound \hspace{1.2cm}  & Upper Bound \hspace{0.5cm} \\
            \midrule
            1 & 0.25000 & \hspace{0.6cm}0.24999 & 0.25003 & \hspace{0.6cm}0.25000 & 0.25000 \\ 
            2 & 0.10417 & \hspace{0.6cm}0.10410 & 0.10421 & \hspace{0.6cm}0.10416 & 0.10418 \\
            3 & 0.06354 & \hspace{0.6cm}0.06339 & 0.06389 & \hspace{0.6cm}0.06348 & 0.06434 \\
            4 & 0.05153 & \hspace{0.6cm}0.04487 & 0.06690 & \hspace{0.6cm}0.05131 & 0.05258 \\
            5 & 0.05221 & \hspace{0.6cm}0.03460 & 0.30626 & \hspace{0.6cm}0.05133 & 0.06491 \\
            6 & 0.06348 & \hspace{0.6cm}0.02910 & $-$ & \hspace{0.6cm}0.05861 & 0.20670 \\
            \bottomrule
        \end{tabular}
        \caption*{A comparison of the upper and lower bounds of the first six moments of a time-space Brownian motion. The bounds are computed using both original and reduced constraints (Optimization I and Optimization II, respectively). A moment sequence with maximum degree $K=8$ is used in both scenarios. Dashes ($-$) indicate settings where SCS did not converge. The SDP with reduced constraints demonstrates tighter bounds, particularly for higher order moments.}
\end{table*}

In this section we provide numerical examples of computing the exit time moments of systems of varying complexity. We highlight the usage of state augmentation to support non-polynomial physical systems and the additional hurdles in computation it brings along.

\subsection{Time-Space Brownian Motion}

We first demonstrate the scalability of the reduced constraints SDP to higher order exit time moments and longer moment sequences through a two-dimensional time-space Brownian motion example. Let $Y_t = y_0 + W_t$ where $W_t$ is a one-dimensional Brownian motion. The time-space process is given by $X = \{(t, Y_t)\}$. The generator is given by:

$$
Af(t,y) = \frac{\partial f}{\partial t}(t,y) + \frac{1}{2}\frac{\partial ^2 f}{\partial y^2}(t,y)
$$

\noindent As the generator is a polynomial with respect to the state, additional state augmentation is not required. 

\renewcommand{\arraystretch}{1.2}
\begin{table}[!b]
        \caption{BM Exit Time Moments ($K=14$)}
        \label{table:brownian_motion_moments_k14}
        \centering
        \begin{tabular}{c|c|ll}
            \toprule
            & & \hspace{0.9cm} Reduced & \hspace{-0.40cm}Constraints \\\cline{3-4}
            \text{} Moment & Analytical Value & \hspace{0.2cm}Lower Bound \hspace{0.2cm} & Upper Bound \hspace{0.2cm}\\
            \midrule
            1 & 0.25000 & \hspace{0.2cm}0.24999 & 0.25003 \\ 
            2 & 0.10417 & \hspace{0.2cm}0.10415 & 0.10418 \\ 
            3 & 0.06354 & \hspace{0.2cm}0.06344 & 0.06357 \\
            4 & 0.05153 & \hspace{0.2cm}0.05135 & 0.05155 \\
            5 & 0.05221 & \hspace{0.2cm}0.05185 & 0.05233 \\
            6 & 0.06348 & \hspace{0.2cm}0.06248 & 0.06387 \\
            \bottomrule
        \end{tabular}
        \caption*{The upper and lower bounds for a time-space Brownian motion are computed using the SDP method with reduced constraints (Optimization II) and moment sequence with maximum degree $K=14$. With a larger moment sequence, we see tighter bounds for the higher order moments. SCS is unable to converge for all six moments when using an SDP with original constraints (Optimization I).}
        \vspace{0.1cm}
\end{table}

We consider the exit of the process from a safe set given by $S = \{(t,y) \mid T \geq t \geq 0, 1 \geq y \geq 0 \}$. The safe set contains the space interval [0,1] and a finite time interval up to time $T$. The initial condition is given by $y_0 = 0.5$. \cref{table:brownian_motion_moments_k8,,table:brownian_motion_moments_k14} show the computed upper and lower bounds of the first six moments of the exit time. 

\vspace{0.3cm}

Using Optimization I with original constraints, we see that the solver returns bounds with minimal spread for lower order moments of the exit time (\cref{table:brownian_motion_moments_k8}). Above the third moment however, the spread between the lower and upper bounds begin to increase due to numerical instabilities and an insufficient moment sequence length. With a maximum moment degree $K=8$, the SDP with original constraints is unable to produce an upper bound for the sixth moment when using SCS. On the other hand, when using Optimization II with reduced constraints, we are able to compute values for all six moments and see a smaller spread between upper and lower bounds. 

A larger moment sequence is required to compute accurate bounds for higher order moments. In this example, we consider a maximum moment degree up to $K=14$. \cref{table:brownian_motion_moments_k14} shows the trade-off between accuracy and computational feasibility that needs to be made when using the SDP with original constraints. Under the scenario with a larger moment sequence, the original formulation fails to converge for all six moments when using SCS. In comparison, the reduced constraint SDP continues to provide bounds for all moments. As expected, we are able to obtain tighter bounds versus those computed with $K=8$ (\cref{table:brownian_motion_moments_k8}). 

\subsection{Spring Mass Damper with Variable Damping Rate}

We consider a spring mass damper system where the mass sits vertically above the spring and damper with the following parameters:
\begin{itemize}
    \item Spring constant $k_s=5.0$ 
    \item Object mass $m_s=1.0$
    \item Static damper constant $k_c=1.0$
\end{itemize}

To demonstrate the redundant state augmentation technique, we consider a variable damper force subject to noise and proportional to both the static damper constant and a sinusoidal term with respect to the position of the mass. A diagram of the setup is shown in Fig. \ref{fig:smd_system_diagram}. The state space $X_t$ is defined as follows:
\begin{equation*}
    X_t = [x, v, t]^\intercal \\[5pt]
\end{equation*}

\renewcommand{\arraystretch}{1.2}
\begin{table*}[!t]
        \caption{Exit Time Bounds of State Augmented Spring/Damper System}
        \caption*{Safe Set $S_1$: $x \in [-2, 0]$}
        \label{table:spring_damper_sdp_s1}
        \centering
        \begin{tabular}{l|llll|llll}
            \toprule
            & & \hspace{0.425cm} Original & \hspace{-0.425cm}Constraints & & & \hspace{0.425cm} Reduced & \hspace{-0.425cm} Constraints & \\\cline{2-9}
            Max Degree & Lower Bound & Upper Bound & LB Runtime & UB Runtime & Lower Bound & Upper Bound & LB Runtime & UB Runtime\\
            \midrule
            $K=4$ & 0.10522 & 49.97546 & 2.84 & 7.03 & 0.02006 & $-$ & 1.28 & $-$\\ 
            $K=6$ & $-$ & $-$ & $-$ & $-$ & 0.10669 & 4.86442 & 14.53 & 572.98 \\ 
            $K=8$ & $-$ & $-$ & $-$ & $-$ & 0.20143 & 2.96940 & 246.18 & 402.81 \\
            $K=10$ & $-$ & $-$ & $-$ & $-$ & 0.65176 & 1.15057 & 1467.34 & 1570.41 \\
            \midrule
            Simulation: \quad & 1.00633 & & \\
            \bottomrule
        \end{tabular}
\end{table*}

\renewcommand{\arraystretch}{1.2}
\begin{table*}[!t]
        \caption*{Safe Set $S_2$: $x \in [-2.5, 0]$}
        \label{table:spring_damper_sdp_s2}
        \centering
        \begin{tabular}{l|llll|llll}
            \toprule
            & & \hspace{0.425cm} Original & \hspace{-0.425cm}Constraints & & & \hspace{0.425cm} Reduced & \hspace{-0.425cm} Constraints & \\\cline{2-9}
            Max Degree & Lower Bound & Upper Bound & LB Runtime & UB Runtime & Lower Bound & Upper Bound & LB Runtime & UB Runtime\\
            \midrule
            $K=4$ & 0.47416 & $-$ & 37.75 & $-$ & 0.11550 & $-$ & 28.74 & $-$\\ 
            $K=6$ & $-$ & $-$ & $-$ & $-$ & 2.14749 & 40.74567 & 200.51 & 1026.80 \\ 
            $K=8$ & $-$ & $-$ & $-$ & $-$ & 3.20881 & 10.95551 & 2415.07 & 2259.54 \\
            $K=10$ & $-$ & $-$ & $-$ & $-$ & 4.60201 & 9.87456 & 6143.05 & 36855.54 \\
            \midrule
            Simulation: \quad & 9.85540 & & \\
            \bottomrule
        \end{tabular}
        \caption*{A comparison of the upper/lower bounds of the exit time for a state augmented system computed using the SDP method with original and reduced constraints. All values have units in seconds. Dashes ($-$) indicate settings where SCS fails to converge. For both safe sets, the SDP with reduced constraints continued to yield results when a larger moment sequence is considered while the SDP with original constraints failed to converge.}
\end{table*}

Here, $x$ is the vertical position of the mass, $v$ is its velocity, and $t$ is the time. The system dynamics are given by: 

\begin{equation*}
    \begin{gathered}
    dX_t = 
    \begin{bmatrix}
    v\\
    -\frac{k_s}{m_s}x-g+\frac{k_c}{m_s}v\sin(x)\\
    1
    \end{bmatrix}dt + \begin{bmatrix}
    0\\
    \frac{k_c}{m_s}\\
    0
    \end{bmatrix}dB_t \\[7pt]
    \end{gathered}
\end{equation*}

In order to produce a generator that maps monomial test functions $f$ to polynomials with respect to the state variables, we consider the following state augmentation:
\begin{equation*}
    \begin{gathered}
    \hat{X}_t = [x,v,t,\sin(x),\cos(x)]^\intercal\\[10pt]
    d\hat{X}_t = 
    \begin{bmatrix}
    v\\
    -\frac{k_s}{m_s}x-g+\frac{k_c}{m_s}v\sin(x)\\
    1 \\
    v\cos(x)\\
    v\sin(x)
    \end{bmatrix}dt + \begin{bmatrix}
    0\\
    \frac{k_c}{m_s}\\
    0\\0\\0
    \end{bmatrix}dB_t\\[7pt]
    \end{gathered}
\end{equation*}

We consider two safe sets with differing exit times: $S_1$ where all safe states $x \in [-2,0]$, and $S_2$ where $x \in [-2.5,0]$. As before, the exit time is made finite with a time horizon $T$. The initial values for vertical position and velocity are $-\frac{9.81}{k_s}$ and $0$, respectively. The upper and lower bounds of the exit time of the augmented SDE is calculated through the SDP formulation using both the original and reduced constraint sets. The Splitting Conic Solver (SCS) with CVXPY is used for all calculations.

\begin{figure}[!h]
    \centering
    \includegraphics[width=0.6\columnwidth]{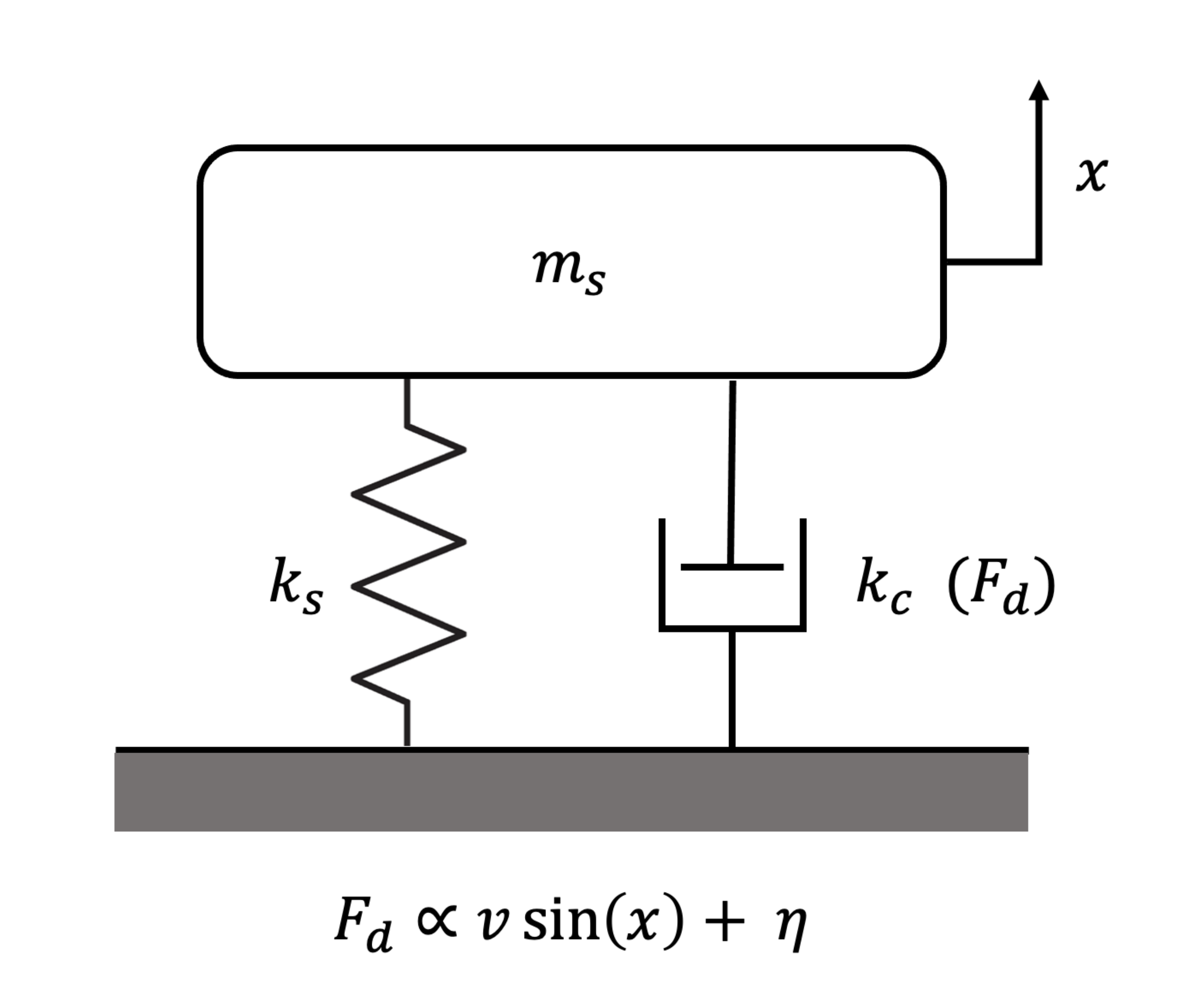}
    \caption{Spring mass damper system with variable damping rate. Damper force $F_d$ is proportional to a sinusoidal function of mass position.}
    \label{fig:smd_system_diagram}
\end{figure}

\cref{table:spring_damper_sdp_s1} shows the bounds on the expected exit time of the system calculated using the SDP method. The max degree $K$ represents the highest order moment considered in the optimization program. All other values in the tables have units of seconds. The observations from the tables provide evidence that the reduced constraints of Optimization II benefits the application of the moment method on higher dimensional systems that require state augmentation. With the original SDP constraints, the solver can only handle the state augmented system up to a maximum moment degree of $K=4$, resulting in a very loose bound on the exit time which inhibits its use for evaluating whether the system's true expected exit time satisfies some desired safety requirements. For degrees above $K=4$, SCS does not converge in the provided number of iterations. The solver continues to struggle when the expected exit time is large (\cref{table:spring_damper_sdp_s1} Safe Set $S_2$) which results in a greater discrepancy in magnitude between moments and higher numerical instability.

By replacing the large PSD matrix constraints associated with the localizing matrices, the solver can compute sequences with significantly higher moment degrees (up to $K=10$) and provide more accurate bounds on the exit time. It should be noted that using the reduced constraint SDP requires the the safe set boundary $\partial S$ to be a product of polynomials, resulting in constraints that contain higher order moments. As such, $K=4$ represents too small of a moment sequence to accurately capture the constraints. The per iteration runtime of SCS shows a non-linear growth trend for each increased moment degree. In some cases, we saw SCS demonstrate a greater total runtime for lower moment degrees due to the solver taking a drastically different number of iterations to converge (e.g. the computation of the upper bound of $K=6$ versus $K=8$ for $S_1$). 

When comparing the original and reduced constraint SDP formulations, the latter provides significantly more useful information regarding the safety behavior of the state augmented system. The columns on the right-hand side of \cref{table:spring_damper_sdp_s1} show much tighter bounds for both safe sets $S_1$ and $S_2$. As was seen previously when using the original constraints, the larger expected exit time of $S_2$ poses SCS with a more difficult problem to solve, however unlike before, we continue to see convergence of the solver for both upper and lower bounds when using the reduced constraints. 

%%%%%%%%%%%%%%%%%%%%%%%%%%%%%%%%%%%%%%%%%%%%%%%%%%%%%%%%%%%%%%%%%%%%%%%%%%%%%%%%
\section{Conclusion}
\label{sec:conclusion}

In this paper we considered a safety analysis of stochastic systems through a moment based exit time method. Our formulation first considers the martingale problem and uses it to define a linear evolution equation linking the occupation and exit measures of the stochastic process under study. When considering processes with appropriate (polynomial) generators, this evolution can be relaxed to a series of conditions involving the moments of the occupation and exit measures. Together with appropriate SDP moment conditions, a convex optimization problem is formed to compute bounds on the exit time moments of the process. Noting the strong assumptions on the system dynamics required for the moment method, we propose a state space augmentation technique to support a broader class of systems. The use of state augmentation expands the moment method to characterize the safety behavior of a wide range of physical systems beyond the polynomial dynamics required by the original approach. We discuss the trade off in computational feasibility that comes with state space augmentation and propose a reformulation of the optimization constraints. The method grants an easily automated procedure for simplifying the large PSD matrix constraints associated with complex dynamics and safe sets --- greatly improving scalability into higher dimensional systems. Taken together, we showed that our methods can be applied to a broader class of polynomial and non-polynomial dynamics, and presented scenarios where a consideration of exit time moments grants useful insight into the safety of the system. 
\newpage

%\printbibliography

%%%%%%%%%%%%%%%%%%%%%%%%%%%%%%%%%%%%%%%%%%%%%%%%%%%%%%%%%%%%%%%%%%%%%%%%%%%%%%%%

% ref: \cite{IEEEexample:articleetal}, \cite{IEEEexample:book}, ...

\bibliographystyle{IEEEtran}
\bibliography{references}

\newpage 
\appendix

\subsection{State Augmentation}
\label{appendix:state_aug_proof}
We present here the proof of \cref{thm:SDE_aug} for the general $n$-dimensional SDE.

\begin{proof} We denote each additional state in the augmented state space as $f(\xi_h\mathbf{x}^{\gamma_h})$. Recall that $\mathbf{x}^{\gamma_{(\cdot)}}$ is a monomial with respect to the state. For an arbitrary $h, 0 \leq h < m$, the dynamics of the state $f(\xi_h\mathbf{x}^{\gamma_h})$ is given by: 
\begin{equation*}
    \begin{aligned}
    df(\xi_h\xh) = f'(\xi_h\xh) \cdot \xi_h \cdot d\xh \\ 
    + \frac{\xi_h^2}{2} \sum_{i,j} \frac{\partial^2 f}{\partial x_i \partial x_j} \cdot (dx_idx_j)
    \end{aligned}
    \label{proof_eq:1}
\end{equation*}
\noindent where 
\begin{equation}
    \begin{aligned}
    d\xh &= \sum_{i=1}^n \big( \mathbbm{1}_{x_i \in \xh} \cdot \prod_{j \neq i} x_j^{\gamma_h[j]} \cdot d x_i^{\gamma_h[i]} \big) \\ \nonumber
    &= \sum_{i=1}^n \big( \mathbbm{1}_{x_i \in \xh} \cdot \prod_{j \neq i} x_j^{\gamma_h[j]} \cdot \gamma_h[i]x_i^{\gamma_h[i]-1} \cdot d x_i \big) \\ \nonumber
    \end{aligned}
\end{equation}

\vspace{-0.35cm}

\noindent We note that by construction, $dx_i$ has drift ($h_i$) and diffusion ($\sigma_{i,(\cdot)}$) that is polynomial w.r.t. the sinusoidal terms $\sin(\phi_{(\cdot)}\bx^{\gamma_{(\cdot)}})$, $\cos(\phi_{(\cdot)}\bx^{\gamma_{(\cdot)}})$, $\sin(\psi_{(\cdot)}\bx^{\gamma_{(\cdot)}})$, $\cos(\psi_{(\cdot)}\bx^{\gamma_{(\cdot)}})$, and state $\bx$. Furthermore, the augmented state space $\mathbf{\hat{x}}$ includes all sinusoidal terms $\sin(\phi_{(\cdot)}\bx^{\gamma_{(\cdot)}})$, $\cos(\phi_{(\cdot)}\bx^{\gamma_{(\cdot)}})$, $\sin(\psi_{(\cdot)}\bx^{\gamma_{(\cdot)}})$, $\cos(\psi_{(\cdot)}\bx^{\gamma_{(\cdot)}})$, up to monomials of degree $\gamma_m$ where $\gamma_m$ is greater than the order of the highest monomial $\bx^{\gamma_{(\cdot)}}$ in the original dynamics of $X$. Therefore $dx_i$ is polynomial w.r.t. the augmented state space $\mathbf{\hat{x}}$. We denote the drift and diffusion of $dx_i$ as $p_i^{(1)}$ and $p_i^{(2)}$, respectively:

\begin{equation*}
    \begin{gathered}
    d\xh = \sum_{i=1}^n \big( \mathbbm{1}_{x_i \in \xh} \cdot \gamma_h[i]x_i^{\gamma_h[i]-1} \cdot \\ \prod_{j \neq i} x_j^{\gamma_h[j]} \cdot 
    [p_i^{(1)}dt + \langle p_i^{(2)},dB_t \rangle\ ] \big) \\
    \end{gathered}
\end{equation*}
\begin{equation*}
    \begin{gathered}
    = \sum_{i=1}^n \big( \mathbbm{1}_{x_i \in \xh} \cdot \gamma_h[i]x_i^{\gamma_h[i]-1} \cdot \prod_{j \neq i} x_j^{\gamma_h[j]} \cdot p_i^{(1)} \big) dt \\
    + \sum_{i=1}^n \big( \mathbbm{1}_{x_i \in \xh} \cdot \gamma_h[i]x_i^{\gamma_h[i]-1} \cdot \prod_{j \neq i} x_j^{\gamma_h[j]} \cdot \langle p_i^{(2)},dB_t \rangle\big)
    \end{gathered}
\end{equation*}

\vspace{0.3cm}

\noindent We define the following \textit{drift sub-matrix} quantities:
\begin{equation*}
\begin{gathered}
    \bm{h} = [h_1, h_2, \dots h_n]^\intercal
\end{gathered}
\end{equation*}

\begin{equation*}
    \begin{gathered}[]
    \bm{h_{\sin}} = [h_{\sin(\phi_1\bx^{\gamma_1})}, \dots h_{\sin(\phi_m\bx^{\gamma_m})}, \\
    h_{\sin(\psi_1\bx^{\gamma_1})}, 
    \dots h_{\sin(\psi_m\bx^{\gamma_m})}]^\intercal
    \end{gathered}
\end{equation*}

\vspace{0.3cm}

\begin{equation*}
    \begin{gathered}[]
    \bm{h_{\cos}} = [h_{\cos(\phi_1\bx^{\gamma_1})}, \dots h_{\cos(\phi_m\bx^{\gamma_m})}, \\
    h_{\cos(\psi_1\bx^{\gamma_1})}, 
    \dots h_{\cos(\psi_m\bx^{\gamma_m})}]^\intercal
    \end{gathered}
\end{equation*}

\vspace{0.3cm}

\noindent Next we define the following \textit{diffusion sub-matrix} quantities:

\begin{equation*}
    \begin{gathered}
    \bm{\sigma} = 
    \begin{bmatrix}
    \sigma_{1,1}&...&\sigma_{1,d}\\
    \vdots&\ddots&\vdots\\
    \sigma_{n,1}&...&\sigma_{n,d}\\
    \end{bmatrix}
    \end{gathered}\\
\end{equation*}

\begin{equation*}
    \begin{gathered}
    \bm{\sigma_{\sin}} = 
    \begin{bmatrix}
    \sigma_{\sin(\phi_1\bx^{\gamma_1}),1}&...&\sigma_{\sin(\phi_1\bx^{\gamma_1}),d}\\ 
    \vdots&\ddots&\vdots\\
    \sigma_{\sin(\phi_m\bx^{\gamma_m}),1}&...&\sigma_{\sin(\phi_m\bx^{\gamma_m}),d}\\
    \sigma_{\sin(\psi_1\bx^{\gamma_1}),1}&...&\sigma_{\sin(\psi_1\bx^{\gamma_1}),d}\\ 
    \vdots&\ddots&\vdots\\
    \sigma_{\sin(\psi_m\bx^{\gamma_m}),1}&...&\sigma_{\sin(\psi_m\bx^{\gamma_m}),d}\\
    \end{bmatrix}
    \end{gathered}\\
\end{equation*}

\begin{equation*}
    \begin{gathered}
    \bm{\sigma_{\cos}} = 
    \begin{bmatrix}
    \sigma_{\cos(\phi_1\bx^{\gamma_1}),1}&...&\sigma_{\cos(\phi_1\bx^{\gamma_1}),d}\\ 
    \vdots&\ddots&\vdots\\
    \sigma_{\cos(\phi_m\bx^{\gamma_m}),1}&...&\sigma_{\cos(\phi_m\bx^{\gamma_m}),d}\\
    \sigma_{\cos(\psi_1\bx^{\gamma_1}),1}&...&\sigma_{\cos(\psi_1\bx^{\gamma_1}),d}\\ 
    \vdots&\ddots&\vdots\\
    \sigma_{\cos(\psi_m\bx^{\gamma_m}),1}&...&\sigma_{\cos(\psi_m\bx^{\gamma_m}),d}
    \end{bmatrix}
    \end{gathered}\\
\end{equation*}

\noindent where 
    
\begin{equation}
    \begin{gathered}
    h_{\sin(\xi_{(\cdot)}\bx^{\gamma_{(\cdot)}})} = \sin(\xi_{(\cdot)}\bx^{\gamma_{(\cdot)}})' \cdot \xi_{(\cdot)} \cdot \\ \sum_{i=1}^n \Big[ \mathbbm{1}_{x_i \in \bx^{\gamma_{(\cdot)}}} \cdot \gamma_{(\cdot)}[i]x_i^{\gamma_{(\cdot)}[i]-1} \cdot \prod_{j \neq i} x_j^{\gamma_{(\cdot)}[j]} \cdot h_i \Big] \\
    + \frac{\xi_{(\cdot)}^2}{2} \sum_{i,j \in n} \Big[ \frac{\partial^2 \sin(\xi_{(\cdot)}\bx^{\gamma_{(\cdot)}})}{\partial x_i \partial x_j} \cdot \sum_{l=1}^d \sigma_{i,l} \cdot \sigma_{j,l} \Big]
    \end{gathered}
    \label{eq_proof:h_sin}
\end{equation}

\vspace{0.3cm}

\begin{equation}
    \begin{gathered}
    \sigma_{\sin(\xi_{(\cdot)}\bx^{\gamma_{(\cdot)}}),k} = \sin(\xi_{(\cdot)}\bx^{\gamma_{(\cdot)}})' \cdot \xi_{(\cdot)}\\ \sum_{i=1}^n \Big[ \mathbbm{1}_{x_i \in \bx^{\gamma_{(\cdot)}}} \cdot  \gamma_{(\cdot)}[i]x_i^{\gamma_{(\cdot)}[i]-1} \cdot \prod_{j \neq i} x_j^{\gamma_{(\cdot)}[j]} \cdot \sigma_{i,k} \Big]
    \end{gathered}
    \label{eq_proof:sigma_sin}
\end{equation}

\vspace{0.3cm}

\begin{equation}
    \begin{gathered}
    h_{\cos(\xi_{(\cdot)}\bx^{\gamma_{(\cdot)}})} = \cos(\xi_{(\cdot)}\bx^{\gamma_{(\cdot)}})' \cdot \xi_{(\cdot)} \\ \sum_{i=1}^n \Big[ \mathbbm{1}_{x_i \in  \bx^{\gamma_{(\cdot)}}} \cdot \gamma_{(\cdot)}[i]x_i^{\gamma_{(\cdot)}[i]-1} \cdot \prod_{j \neq i} x_j^{\gamma_{(\cdot)}[j]} \cdot h_i \Big] \\
    + \frac{\xi_{(\cdot)}^2}{2} \sum_{i,j \in n} \Big[ \frac{\partial^2 \cos(\xi_{(\cdot)}\bx^{\gamma_{(\cdot)}})}{\partial x_i \partial x_j} \cdot \sum_{l=1}^d \sigma_{i,l} \cdot \sigma_{j,l} \Big]
    \end{gathered}
    \label{eq_proof:h_cos}
\end{equation}

\vspace{0.3cm}

\begin{equation}
    \begin{gathered}
    \sigma_{\cos(\xi_{(\cdot)}\bx^{\gamma_{(\cdot)}}),k} = \cos(\xi_{(\cdot)}\bx^{\gamma_{(\cdot)}})' \cdot \xi_{(\cdot)} \\ \sum_{i=1}^n \Big[ \mathbbm{1}_{x_i \in \bx^{\gamma_{(\cdot)}}} \cdot  \gamma_{(\cdot)}[i]x_i^{\gamma_{(\cdot)}[i]-1} \cdot \prod_{j \neq i} x_j^{\gamma_{(\cdot)}[j]} \cdot \sigma_{i,k} \Big]
    \end{gathered}
    \label{eq_proof:sigma_cos}
\end{equation}

\vspace{0.5cm}

\noindent The dynamics of the augmented SDE is now given by:

\begin{equation*}
\begin{aligned}
d\hat{X_t} =\begin{bmatrix}
\bm{h} \\
\bm{h_{\sin}}\\
\bm{h_{\cos}}\\
\end{bmatrix} dt + \begin{bmatrix}
\bm{\sigma} \\
\bm{\sigma_{\sin}} \\
\bm{\sigma_{\cos}} 
\end{bmatrix} dB_t \\[7pt]
\end{aligned}
\label{example:state_space_aug}
\end{equation*}

\noindent We note that all terms in Eqs. (\ref{eq_proof:h_sin})-(\ref{eq_proof:sigma_cos}) are polynomial w.r.t. the augmented state space $\mathbf{\hat{x}}$, in addition, as polynomials are closed under addition and multiplication, the resulting drift and diffusion terms corresponding to the augmented state is also polynomial w.r.t. $\mathbf{\hat{x}}$. The remaining dynamics $h_1,...,h_n$ and $\sigma_{1,1},...,\sigma_{n,d}$ have already been shown to be polynomial w.r.t. $\mathbf{\hat{x}}$. We now apply \cref{eq:gen_th_ito} to obtain the generator $Af(\hat{x})$ of the augmented system. The monomials of $\mathbf{\hat{x}}$ are closed under differentiation w.r.t. $x \in \mathbf{\hat{x}}$ which gives us that $\frac{\partial f}{\partial x_i}$ and $\frac{\partial ^2 f}{\partial x_i \partial x_j}$ are monomials of $\mathbf{\hat{x}}$ for all $x_i,x_j \in \mathbf{\hat{x}}$, and test functions $f = (\mathbf{\hat{x}})^\beta, \beta \in \mathbb{N}^{|\mathbf{\hat{x}}|}$. As a result, we see that \cref{eq:gen_th_ito} applied to the augmented system $\hat{X}$ for monomial test functions yields a generator consisting of the sum of products between polynomials and monomials ($h_{(\cdot)}, \sigma_{(\cdot)}\sigma_{(\cdot)}^{\intercal}, \frac{\partial f}{\partial x_i}, \frac{\partial ^2 f}{\partial x_i \partial x_j}$) w.r.t. the augmented state space. Again, following the closure properties of polynomials, the resulting generator is polyonmial w.r.t. the augmented state space.
\end{proof}

\subsection{Examples}
\label{example:state_aug}
\theoremstyle{definition}
\newtheorem{exmp}{Example}[]
\begin{exmp}
We obtain the martingale constraints through state space augmentation for the time dependent SDE:

\vspace{0.2cm}

\begin{equation*}
\begin{aligned}
dX_t &= \begin{bmatrix}
dx\\
dt
\end{bmatrix} =\begin{bmatrix}
\sin(x) \\
1\\
\end{bmatrix} dt + \begin{bmatrix}
\cos(x) \\
0
\end{bmatrix} dB_t \\[7pt]
\end{aligned}
\label{example:state_space_aug}
\end{equation*}

\vspace{0.2cm}

\noindent Following (\ref{eq:gen_th_ito}), the generator of the system $Af$ is given by:

\vspace{0.2cm}

\begin{equation*}
    Af = \sin(x)\frac{\partial f}{\partial x} + \frac{\partial f}{\partial t} + \frac{1}{2}\cos^2(x)\frac{\partial^2f}{ \partial x^2}\\[5pt]
\end{equation*}

\vspace{0.2cm}

\noindent We see that due to the coefficients $\sin(x)$ and $\cos^2(x)$, we are unable to express the generator $Af$ as a polynomial with respect to the state space $X_t = [x,t]^\intercal$. Thus we add redundant states to augment the original state space. The augmented state space $\hat{X_t}$ is given by:

\begin{equation*}
    \hat{X}_t = [x,t,\sin(x),\cos(x)]^\intercal\\[6pt]
\end{equation*}

\noindent The dynamics of the augmented SDE are:

\vspace{-0.1cm}
\begin{equation*}
    \begin{gathered}
    d\hat{X}_t = \hat{h}(\hat{X}_t)dt + \hat{\sigma}(\hat{X}_t)dB_t \\[8pt]
    = \begin{bmatrix}
    \sin(x)\\1\\
    \cos(x)\sin(x) \big[ 1- \frac{1}{2}\cos(x) \big] \\
    -\sin^2(x) - \frac{1}{2}\cos^3(x)
    \end{bmatrix}dt + \begin{bmatrix}
    \cos(x)\\0\\\cos^2(x)\\-\sin(x)\cos(x)
    \end{bmatrix}dB_t\\[8pt]
    \end{gathered}
\end{equation*}

\noindent The SDE now satisfies the conditions in Proposition \ref{prop:SDE_closed_under_generation}. The new generator $\hat{A}f$ can then be obtained through (\ref{eq:gen_th_ito}). For monomial test functions $f$, the generator is a polynomial with respect to the augmented state $\hat{X}_t$. We can apply (\ref{eq:spacetime_mt_const}) to derive the corresponding martingale constraints. 

\end{exmp}

\end{document}